\documentclass{article}

\usepackage[english]{babel}

\usepackage[letterpaper,top=2cm,bottom=2cm,left=3cm,right=3cm,marginparwidth=1.75cm]{geometry}

\usepackage{graphicx} 
\usepackage{xcolor}
\usepackage{algorithm}
\usepackage{algorithmic}
\usepackage{amssymb} 
\usepackage{amsthm}
\usepackage{amsmath} 
\usepackage{authblk}
\usepackage{parskip}
\usepackage{hyperref}
\usepackage{url}

\newcommand{\cmin}{c_{\text{min}}}
\DeclareMathOperator{\poly}{poly}

\usepackage{tikz}
\usetikzlibrary{calc}
\usetikzlibrary{patterns}
\usetikzlibrary{math}

\newtheorem{lemma}{Lemma}
\newtheorem{corollary}[lemma]{Corollary}
\newtheorem{theorem}[lemma]{Theorem}

\usetikzlibrary{decorations.pathreplacing} 
\tikzset{vertexA/.style={circle,fill=black,minimum size=5pt,inner sep=0pt, font=\tiny}}
\tikzstyle{vertexB}=[circle,fill=black,minimum size=5pt,inner sep=0pt, font=\tiny]

\title{On the Assadi–Liu–Tarjan Auction Algorithm for Bipartite Matching: Simplification, Alternative Analysis, and Hard Instance} 

\date{}

\author{Christian Konrad\thanks{\texttt{christian.konrad@bristol.ac.uk}, School of Computer Science, University of Bristol, UK. }\hspace{20px}Kheeran K. Naidu\thanks{\texttt{kheeran.naidu@gmail.com}, Qworky Research, London, UK.}\hspace{20px}Archie Walton\thanks{\texttt{archie.walton@bristol.ac.uk}, School of Computer Science, University of Bristol, UK. Supported by an EPSRC DTP studentship.}\hspace{20px}Eric Wang\thanks{\texttt{erw015@ucsd.edu}, University of California, San Diego, US.}}

\begin{document}

\maketitle

\begin{abstract}
Assadi, Liu, and Tarjan [SOSA'21] gave an auction algorithm that outputs a $(1-\epsilon)$-approximation to \textsf{Maximum Matching} in bipartite graphs. Their algorithm computes a sequence of $O(\frac{1}{\epsilon^2})$ maximal matchings in subgraphs of the input graph and can be implemented in the multi-pass streaming setting with $O(\frac{1}{\epsilon^2})$ passes in a straightforward manner, which constitutes the state-of-the-art pass/approximation trade-off result in the multi-pass streaming setting. Their analysis uses tools from combinatorial auctions and, at its heart, relies on a clever potential function argument. Their proof, however, provides only limited insight into the inner workings of the algorithm.  

\vspace{0.1cm}

In this paper, we revisit the \textsc{ALT}-algorithm and present the following contributions:

\vspace{0.1cm}

\begin{enumerate}
    \item \textbf{Simplification.} The \textsc{ALT}-algorithm is built upon a {\em freezing mechanism} where vertices on one side of the bipartition that have already been rematched $\Theta(\frac{1}{\epsilon})$ times over the course of the algorithm remain matched to their current partner forever. We show that this mechanism is in fact unnecessary, i.e., no special treatment of such vertices is needed. With the freezing mechanism removed, the parameter $\epsilon$ now solely determines the total number of iterations/maximal matching computations, which provides the option of adaptively refining $\epsilon$ as the algorithm runs. 
    \item \textbf{Alternative Analysis.} We give an alternative analysis of the algorithm that is based on augmenting paths. Beyond the auction-perspective of the algorithm as established by Assadi et al., our analysis allows for a reinterpretation as one that follows the  traditional approach of searching for and eliminating augmenting paths. Our analysis also copes with the removal of the freezing mechanism in a natural way, whereas the analysis of Assadi et al. strictly depends on its use. 
    \item \textbf{Hard Instance.} We provide the first hard instance on which the algorithm requires $\Omega(\frac{1}{\epsilon^2})$ iterations/maximal matching computations. The instance is a simple path graph, where we exhibit a cyclic behaviour that prevents fast progress. Hard instances for this algorithm therefore do not necessarily have to be dense.     
\end{enumerate}
\end{abstract}

\section{Introduction} 
A {\em matching} in a graph  is a subset of vertex-disjoint edges. A matching is {\em maximal} if it cannot be enlarged by adding an edge outside of the matching to it, and a matching is {\em maximum} if it is of largest possible size. Then, the \textsf{Maximum Matching} problem (\textsf{MM}) consists of computing a maximum matching, and, for any $0 < \alpha \le 1$, an $\alpha$-approximation algorithm to \textsf{MM} is an algorithm that outputs a matching of size at least $\alpha$ times the size of a maximum matching.

Motivated by the fact that, in various restricted computational models, computing a maximal matching is an easy task, there has been growing interest in algorithms for \textsf{MM} approximation that are solely based on computing a sequence of maximal matchings in subgraphs of the input graph so that the union of these maximal matchings contains a non-trivial approximation to \textsf{MM} \cite{kk20,kns23}. For example, in the {\em semi-streaming model} of computation \cite{fkmsz05}, where an algorithm performs multiple passes over the edges of an $n$-vertex input graph while maintaining a memory of size $O(n \poly \log n)$, a single pass is sufficient for computing a maximal matching. In the $O(n)$-memory {\em massively parallel computation model} (MPC model) \cite{ksv10}, where the input graph is partitioned between multiple machines with memory $O(n)$ each and the machines exchange messages in synchronous rounds, a maximal matching can be computed in $O(\log \log n)$ rounds \cite{bhhz19}. Hence, any  algorithm that finds an approximation to \textsf{MM} solely by computing $p$ maximal matchings in subgraphs of the input graph can therefore give algorithms in restricted computational models with only a factor $p$ overhead over the complexity of computing a maximal matching\footnote{As long as the subgraphs on which the maximal matching algorithm is to be executed can be specified/computed in the respective computational model.}.

Assadi, Liu, and Tarjan \cite{alt21} gave such an algorithm for bipartite graphs (henceforth the \textsc{ALT}-algorithm). For any small $\epsilon > 0$, the \textsc{ALT}-algorithm produces a $(1-\epsilon)$-approximation to \textsf{MM} and is solely based on the sequential computation of $O(\frac{1}{\epsilon^2})$ maximal matchings. When implemented in the semi-streaming model, the algorithm uses $O(\frac{1}{\epsilon^2})$ passes and constitutes the state-of-the-art bipartite matching algorithm in this model. We, however, note that streaming algorithms with pass complexity $O(\frac{1}{\epsilon} \cdot \log n)$ are also known \cite{ag18,ajjst22,a24}, which run faster when $\epsilon = o(\frac{1}{\log n})$.

\vspace{0.1cm} 

The \textsc{ALT}-algorithm as presented in \cite{alt21} is phrased using terminology from combinatorial auctions, and their analysis also uses ideas from this area. We observe that this is very different to previous matching algorithms in the streaming model, which are mostly based on finding and eliminating augmenting paths \cite{kns23,ab21,kn21,k18, ekms12,kmm12,m05},  on linear programming \cite{a24,ag13}, and, more recently, on the edge-degree constrained subgraph technique \cite{b24,ab21}. Regarding $(1-\epsilon)$-approximation multi-pass algorithms for bipartite graphs, Eggert et al. \cite{ekms12} previously gave an augmenting paths-based algorithm that runs in $O(\frac{1}{\epsilon^5})$ passes, and Ahn and Guha \cite{ag13} gave a linear programming based algorithm that runs in $O(\frac{1}{\epsilon^2} \log \log(\frac{1}{\epsilon}))$ passes. 

\vspace{0.1cm}

\noindent \textbf{Our Results.} 
The purpose of this paper is to provide a deeper understanding of the \textsc{ALT}-algorithm. We present the following contributions:

\begin{enumerate}
 \item \textbf{Simplification.} The \textsc{ALT}-algorithm operates on a bipartite graph $G=(A, B, E)$ and maintains an initially empty matching $M$ that, at the end of the algorithm, constitutes a $(1-\epsilon)$-approximation. The algorithm is round-based, and, in each round of the algorithm, a matching $M'$ between the currently unmatched $A$-vertices and their neighbors is computed in a carefully chosen {\em demand subgraph} (see Section~\ref{sec:algorithm} for precise definitions). The matching $M$ is then updated  by setting 
 $$M \gets M' \cup \{e \in M \ : \ e \text{ not incident to an edge in $M'$ }  \} \ .$$ 
 One property of proceeding in this way is that, over the course of the algorithm, $B$-vertices change their partners many times. The \textsc{ALT}-algorithm is built upon a {\em freezing mechanism} that effectively freezes a $B$-vertex to its current match once it has been rematched $\Theta(\frac{1}{\epsilon})$ times. This somewhat artificial rule is crucial for Assadi et al.'s analysis to work. 
 
 We show that the freezing mechanism is in fact unnecessary and no special treatment of vertices that have changed their partners often is required. Removing the freezing mechanism simplifies the algorithm and has the added benefit that the parameter $\epsilon$ now solely determines the number of iterations/maximal matching computations of the algorithm. This implies that adaptive refinements of $\epsilon$ while the algorithm runs are now possible.

 \item \textbf{Alternative Analysis.} While our analysis reuses some of the facts established by Assadi et al., we employ an entirely different argument at its core that deals with the removal of the freezing mechanism in a natural way. We establish a connection between the number of times $c(b)$ that a $B$-vertex $b$ has changed partners and the length of any augmenting path that runs through $b$. More precisely, if there exists an augmenting path starting at an unmatched vertex $a \in A$ that goes through the vertex $b$ then this path is of length at least $2c(b) + 1$. Hence, the lengths of augmenting paths increase over the course of the algorithm whenever a $B$-vertex is rematched. Since longer augmenting paths imply larger matchings, this connection immediately explains how the algorithm makes progress.

 \item \textbf{Hard Instance.} We give the first hard instance on which the \textsc{ALT}-algorithm requires $\Omega(\frac{1}{\epsilon^2})$ iterations. Our instance is the path graph, where we demonstrate a cyclic behaviour that prevents fast progress. In this instance,  the algorithm repeatedly cycles through very similar matchings many times. Interestingly, most edges of the path are inserted and removed from the current matching $\Theta(\frac{1}{\epsilon})$ times, a key property that demonstrates  that hard instances for this algorithm do not necessarily have to be dense. 
 

\end{enumerate}

\vspace{0.1cm}
\noindent \textbf{Outline.} In Section~\ref{sec:algorithm}, we discuss the \textsc{ALT}-algorithm, and we give an augmenting-paths based analysis of the algorithm in Section~\ref{sec:analysis}. Then, in Section~\ref{sec:lb}, we show that the algorithm may require $\Omega(\frac{1}{\epsilon^2})$ passes on a path graph. Finally, we conclude in Section~\ref{sec:conclusion}.


\section{Algorithm}\label{sec:algorithm}
The \textsc{ALT}-algorithm is depicted in Algorithm~\ref{alg:1-eps}. 

\begin{algorithm}
	\caption{Assadi-Liu-Tarjan Algorithm: ($1-\epsilon$)-approximation for \textsf{MM} on bipartite graphs} \label{alg:1-eps}
	\textbf{Input:} Bipartite input graph $G = (A,B,E)$, parameter $\epsilon > 0$

	\begin{algorithmic}[1]        
		\STATE Let $M = \varnothing$ and, for all $b \in B$, let $c(b) = 0$

        \FOR{\textbf{each} round $r = 1, 2, \ldots, 49/\epsilon^2$}
            \STATE For every $a \in A \setminus A(M)$, let $\cmin(a):= \text{min}\{ c(b) : b \in \Gamma(a) \}$            
            \STATE \label{line:four} Let $H_r \subseteq G$ be the subgraph spanned by all edges $(a,b) \in (A \setminus A(M)) \times B$ such that $c(b) = \cmin(a)$ 
            \STATE Compute a maximal matching $N_r$ in $H_r$
            \STATE $M \gets \{ ab \in M \ : \ b \notin B(N_r) \}  \cup N_r$
            \STATE Increment the count $c(b)$ of each $b \in B(N_r)$ by $1$
        \ENDFOR
		\RETURN $M$
	\end{algorithmic}
\end{algorithm}

The listing differs in two aspects from its original version in \cite{alt21}. First, in our version, the quantity $c(b)$, for every $b \in B$, when increased is increased by one, while this quantity is increased by $\epsilon$ in \cite{alt21}. The reason for scaling these values is that our interpretation of $c(b)$ is different as in \cite{alt21} -- see further below for a discussion.

Second, and more importantly, the original version in \cite{alt21} {\em freezes} $B$-vertices once they have been rematched $\Theta(\frac{1}{\epsilon})$ times. This translates into adding the constraint "and $c(b) \le \frac{3}{\epsilon}$" to the end of Line~\ref{line:four}. As previously discussed, one of our key contributions is the insight that the freezing mechanism is unnecessary and can be removed.

We note that removing the freezing mechanism has a significant benefit. In our version of the \textsc{ALT}-algorithm, the parameter $\epsilon$ solely determines the number of iterations, while, in its original version by Assadi et al., the parameter $\epsilon$ also determines the moment at which vertices are frozen. Hence, in our version, the execution of the algorithm for a parameter $\epsilon' > \epsilon$ is a prefix of the execution of the algorithm with parameter $\epsilon$, which means that the precision parameter $\epsilon$ can be updated while the algorithm is running. This is not possible in its original version.

The algorithm works as follows. Given the bipartite input graph $G=(A, B, E)$, the algorithm maintains a matching $M$. The matching $M$ evolves over the course of the algorithm but maintains the invariant that if a vertex $b \in B$ is matched in $M$ at some moment then the vertex $b$ remains matched throughout the algorithm. We note that $b$ may change mates 
along the way, however, it will never become unmatched again. The matching $M$ thus cannot decrease in size. 

The algorithm maintains, for each $b \in B$, a counter $c(b)$, which counts the number of times the vertex has changes its mate in $M$. In its original version in \cite{alt21}, once $c(b)$ reaches the count $\Theta(\frac{1}{\epsilon})$ then $b$ ceases to swap partners and remains matched to the current partner until the end of the algorithm. In our version, this step is unnecessary and $b$ continues to swap partners until the algorithm terminates.

In each iteration of the algorithm, the currently unmatched $A$-vertices compete for a new mate, which often requires unmatching a currently matched $A$-vertex. To this end, each unmatched $A$-vertex $a \in A \setminus A(M)$ first computes the quantity $\cmin(a) = \min \{c(b) \ : \ b \in \Gamma(a) \}$, which is the smallest number of times that any neighbour of $a$ has changed mates so far. Then, a {\em demand subgraph} $H_r$ is established, where, for each unmatched $a \in A \setminus A(M)$, every edge $ab$ is included such that $c(b) = \cmin(a)$. Next, a maximal matching $N_r$ in $H_r$ is computed and the current matching $M$ is updated so that every edge of $M$ that is incident to a matched $B$-vertex in $N_r$ is replaced with the corresponding edge in $N_r$. 
Finally, the count $c(b)$ of every vertex $b$ that has received a new mate is increased by $1$ to reflect the previous step.

Phrased in the language of auctions as in \cite{alt21}, the $A$-vertices constitute the set of bidders and the $B$-vertices constitute the set of items. Then, for an item $b \in B$, the quantity $c(b)$ is the current price of an item, which increases by $1$ every time the item is rematched to a different bidder. For a bidder $a \in A$, the quantity $\cmin(a)$ corresponds to the price of the cheapest item in the neighbourhood of $a$, i.e., items that the bidder $a$ could acquire. The demand subgraph $H_r$ thus only consists of edges between unmatched bidders and the cheapest items available to them.

\vspace{0.1cm}
\noindent \textbf{Basic Properties.}
We highlight some basic facts about the algorithm that are easy to verify and that were also used by Assadi et al.\ \cite{alt21}:
\begin{enumerate}
 \item[\textbf{P1:}] Once a $B$-vertex is matched, it will never become unmatched again. The matching $M$ therefore never decreases in size.
 \item[\textbf{P2:}] For any vertices $a \in A$ and $b \in B$, the quantities $\cmin(a)$ and $c(b)$, respectively, never decrease.
 \item[\textbf{P3:}] Consider an unmatched $A$-vertex $a \in A \setminus A(M)$ at the start of any iteration, and suppose that it remains unmatched by the end of the iteration. Then, the value $\cmin(a)$ increases by one in this iteration.
\end{enumerate}
Furthermore, we say that an unmatched $A$-vertex $a$ is {\em saturated} if $\cmin(a) \ge 6 / \epsilon$. An $A$-vertex that is not saturated is {\em unsaturated}. 

\section{Analysis} \label{sec:analysis}
Let $OPT$ be an arbitrary but fixed maximum matching in the input graph $G=(A, B, E)$. We write $A_{\text{OPT}} := A(OPT)$ to denote the $A$-vertices incident to $OPT$. 

Assadi et al.'s analysis has two main components.  First, they argue that if a specific configuration of the variables is achieved then the matching $M$ must be of size at least $(1-\epsilon)|OPT|$ ({\em approximation part}). Second, they argue that such a configuration must necessarily be reached at some moment during the execution ({\em progress part}).

In our analysis below, we introduce augmenting paths-based arguments. These arguments are used to both give an entirely different proof of the approximation part, and to augment Assadi et al.'s arguments for the progress part in order to show that the algorithm equally works without the freezing mechanism.

Assadi et al.'s key lemma regarding the approximation factor states the following. They define the notion of {\em $\epsilon$-happiness} of a bidder, which translates to saying that an $A$-vertex is $\epsilon$-happy either if it is matched or if its $\cmin(a)$ value has reached $\Theta(1/\epsilon)$ -- the latter of which corresponds to our notion of an unmatched $A$-vertex being saturated. The interpretation of this notion is that if a bidder is  matched then it is clearly happy, while if it is not matched but all items in their neighbourhood have become quite expensive then the unmatched $A$-vertex is still {\em relatively} happy, or $\epsilon$-happy, as they would need to spend a large amount of money to purchase the item. They prove that if in some iteration of the algorithm there are $(1-\frac{1}{2} \epsilon)|OPT|$ $\epsilon$-happy bidders then the resulting matching is a $(1-\epsilon)$-approximation. Their proof uses a potential function argument, summing the ``utilities'' of the $\epsilon$-happy bidders and giving both upper and lower bounds on this quantity.

Our key insight into an augmenting paths-based analysis of the approximation factor is as follows. Consider an unmatched vertex $a \in A_{\text{OPT}} \setminus A(M)$ with its min-cost $\cmin(a)$. We observe that the quantity $\cmin(a)$ directly translates into a lower bound on the length of an augmenting path in $M \oplus OPT = (M \setminus OPT )\cup (OPT \setminus M)$ that starts at vertex $a$, i.e., we prove that the length of the augmenting path starting at $a$ is of length at least $2 \cmin(a) + 1$. Hence, if an unmatched vertex has a large $\cmin$ value then the damage incurred by the fact that $a$ is not matched is small since the augmenting path is long. Recall that, in an augmenting path of length $2\ell + 1$, we have only one unmatched $A$-vertex and $\ell$ matched $A$-vertices. Thus, locally, the approximation factor on this path is $\ell / (\ell + 1)$. Hence, if we can ensure that $\cmin(a) = \Omega(\frac{1}{\epsilon})$ holds for most unmatched $A_{\text{OPT}}$-vertices  then the current matching $M$ is a $(1-\epsilon)$-approximation. We will see that this follows from the progress part of the analysis.

Regarding the progress part of the analysis, we use similar argument as Assadi et al. However, we include the insight that the number of unmatched $A_{\text{OPT}}$-vertices with large min-cost value is bounded, which is a property that follows from our augmenting paths-based analysis. In more detail, we show that there are at most $|OPT|/k$ unmatched $A_{\text{OPT}}$-vertices with min-cost value $\Omega(k)$, for integers $k \ge 1$. This property is crucial to show that the freezing mechanism is not needed. 

We will now give the proof of our key lemma:

\vspace{0.2cm} 

\begin{lemma}\label{lem:aug-path-length}
 Consider the state of the algorithm at the beginning of any iteration. For any $k \ge 0$, let $\mathcal{P}$ be an augmenting path of length $2k+1$ in $M \oplus OPT$, and let $a_0 \in A_{\text{OPT}}$ be the unmatched $A$-vertex in $\mathcal{P}$. Then:
 $$k \ge \cmin(a_0) \ . $$
 Furthermore, the set $M \oplus OPT$ contains at most $\frac{1}{k+1} \cdot |OPT|$  augmenting paths that start at $A_{\text{OPT}}$-vertices $a$ with $\cmin(a) \ge k$.
\end{lemma}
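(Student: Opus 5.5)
Write the path as $\mathcal{P} = a_0, b_1, a_1, b_2, a_2, \ldots, a_{k-1}, b_k$. Here $a_0$ is unmatched in $M$, $b_k$ is unmatched in $M$, the edges $a_{i-1}b_i$ lie in $OPT \setminus M$, and the edges $b_i a_i$ lie in $M \setminus OPT$ for $1 \le i \le k-1$. The plan for the first claim is to show by backward induction along the path that
\[
c(b_i) \le k - i \quad \text{for all } 1 \le i \le k.
\]
Since $b_1 \in \Gamma(a_0)$, this gives $\cmin(a_0) \le c(b_1) \le k-1 \le k$. The base case uses P1: $b_k$ is unmatched in $M$, so it has never been matched, and hence $c(b_k) = 0$.

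For the inductive step, suppose $c(b_{i+1}) \le k-i-1$ and consider the edge $b_i a_i \in M$. At the moment $a_i$ was last matched to $b_i$ (in some round $r$), we had $ab_i \in N_r \subseteq H_r$. Therefore the value of $c(b_i)$ just before that round equalled $\cmin(a_i)$ at that time, and it was at most the value of $c(b_{i+1})$ at that time, because $b_{i+1} \in \Gamma(a_i)$. After the round, $c(b_i)$ went up by exactly one. Since then, $b_i$ has stayed matched to $a_i$, so $c(b_i)$ has not changed. Using P2, $c(b_{i+1})$ has only grown. So
\[
c(b_i) = c^{(r)}(b_{i+1}) + 1 \le c(b_{i+1}) + 1 \le k - i.
\]
This closes the induction. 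The main care needed is in tracking time-indexed counters and in the claim that $c(b_i)$ is frozen while $b_i$ keeps the same mate. Both follow because counters increase only when a vertex is rematched. The induction in fact gives the stronger bound $\cmin(a_0) \le k-1$ when $k \ge 1$. For $k = 0$ there is no such path unless $a_0$ is isolated, in which case the claim is trivial or vacuous.

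For the second claim, observe that the augmenting paths in $M \oplus OPT$ are vertex-disjoint. By the first part, any such path starting at an $a$ with $\cmin(a) \ge k$ has length at least $2k+1$. Each such path then contains at least $k+1$ edges of $OPT$, namely its $OPT$-edges $a_{i-1}b_i$. By disjointness these $OPT$-edges are distinct across paths. Hence the number of such paths is at most $|OPT|/(k+1)$.

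The main obstacle is making the timing argument in the inductive step rigorous. One subtlety is whether $b_{i+1}$'s counter in round $r$ might exceed its current value, but P2 rules this out. The other is the case that $a_i$ may have been matched to $b_i$ in several separate stints, which is handled by looking only at the last one.
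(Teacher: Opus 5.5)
Your argument is correct in substance and follows the paper's proof. The key step is the same: in the last round $r$ in which $a_i$ was matched to $b_i$, we have $c^{(r)}(b_i)=\cmin^{(r)}(a_i)\le c^{(r)}(b_{i+1})$, and then one increment plus monotonicity finishes it. The base case is also the same ($c=0$ at the free $B$-endpoint via P1), as is the disjointness count for the second claim; you count $OPT$-edges where the paper counts $A_{\text{OPT}}$-vertices, which is equivalent. The only difference is direction. The paper runs the chain forward as $\cmin(a_{i+1})\ge\cmin(a_i)-1$, which tacitly extends $\cmin$ to matched $A$-vertices. You run it backward as $c(b_i)\le c(b_{i+1})+1$, which needs monotonicity only for counters. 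You also justify why $c(b_i)$ is unchanged since round $r$, which the paper uses without comment.

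You do need to fix an off-by-one in your labeling; the paper's write-up contains the same slip.
\begin{itemize}
\item The sequence $a_0,b_1,\ldots,a_{k-1},b_k$ has $2k$ vertices and hence only $2k-1$ edges. A path of length $2k+1$ is $a_0,b_1,a_1,\ldots,b_k,a_k,b_{k+1}$, where $b_k$ is matched to $a_k$ and the free endpoint is $b_{k+1}$. So your base case ``$b_k$ is unmatched'' is false as written.
\item Running your induction on the correctly labeled path gives $c(b_i)\le k+1-i$ for $1\le i\le k+1$. Hence $\cmin(a_0)\le c(b_1)\le k$, which is exactly the lemma.
\item Your claimed strengthening $\cmin(a_0)\le k-1$ is an artifact of the mislabeling and is false. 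Take $A=\{a_0,a_1\}$, $B=\{b_1,b_2\}$ and edges $a_0b_1,a_1b_1,a_1b_2$, and let round 1 choose the maximal matching $N_1=\{a_1b_1\}$ of $H_1$. At the start of round 2 the augmenting path $a_0b_1a_1b_2$ has $k=1$, while $\cmin(a_0)=c(b_1)=1$.
\item Your remark on $k=0$ is also wrong. A length-$1$ augmenting path is just an $OPT$-edge $a_0b_1$ with both endpoints free (in round 1 every $OPT$-edge is one). The bound there is $\cmin(a_0)\le c(b_1)=0$, which is the base case of the corrected induction.
\item The displayed equality $c(b_i)=c^{(r)}(b_{i+1})+1$ should read $c(b_i)=c^{(r)}(b_i)+1=\cmin^{(r)}(a_i)+1\le c^{(r)}(b_{i+1})+1$, as your prose says.
\end{itemize}
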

\begin{proof}
Let $\mathcal{P} = a_0 b_1 a_1 b_2 a_2 \dots b_{k-1} a_{k-1} b_k \in M \oplus OPT$ be an augmenting path of length $2k + 1$, i.e., we have $a_ib_{i+1} \in OPT$ and $a_i b_i \in M$, for every $i \in [k]$. We first observe that, by definition of $\cmin(a_i)$ and the fact that $b_{i+1}$ is in the neighbourhood of $a_i$, it holds that
 $c(b_{i+1}) \ge \cmin(a_i)$. 

Next, we will argue that, for every $0 \le i \le k-1$, we have 
    $\cmin(a_i)-1 \le \cmin(a_{i+1}).$
Indeed, as argued above, we have $c(b_{i+1}) \ge \cmin(a_i)$. Next, observe that $a_{i+1}$ is matched in $M$ to $b_{i+1}$. Consider the iteration $r$ when $a_{i+1}$ was most recently matched to $b_{i+1}$.  Then, denote by $c^r(b_{i+1})$ the value of $c(b_{i+1})$ in this iteration, and, similarly, denote by $\cmin^r(a_i)$ the value of $\cmin(a_i)$ in this iteration. In iteration $r$, the value of $c^r(b_{i+1})$ was increased to its current value $c(b_{i+1})$, i.e., we have $c^r(b_{i+1}) = c(b_{i+1}) - 1$. Since in iteration $r$, the edge $(a_{i+1}, b_{i+1})$ was contained in $H_r$, we have that $\cmin^r(a_{i+1}) = c^r(b_{i+1}) (=  c(b_{i+1}) - 1)$. Last, since $\cmin(a_{i+1}) \ge \cmin^r(a_{i+1})$ (Property~\textbf{P2}), we obtain the claimed bound.

The previous bound implies that $\cmin(a_k) \ge \cmin(a_0) - k$, and this in turn implies that $c(b_k) \ge \cmin(a_0) - k$. Since $b_k$ is unmatched, which, by Property~\textbf{P3}, means that $b_k$ has not previously been matched, we have that $c(b_k) = 0$. We thus obtain $0 \ge \cmin(a_0) - k$ 
as desired.

Next, consider the set of augmenting paths $\mathcal{P}_{\ge k} \subseteq M \ \oplus \ OPT$ starting at a vertex $a \in A_{\text{OPT}}$ with $\cmin(a) \ge k$. Observe that these paths are vertex-disjoint, and, as discussed above, each of these paths visits at least $k+1$ $A_{\text{OPT}}$ vertices. Hence, there can be at most $|A_{\text{OPT}}| / (k+1) = |OPT| / (k+1)$ such paths, which completes the proof.
\end{proof}

 

Using the insight from the previous lemma, we show that it is enough to find an upper bound on the number of unmatched $A_\text{OPT}$ vertices with low neighbourhood cost in order to establish the approximation factor of the algorithm.

\begin{corollary}\label{cor:approx}
 Consider an iteration $r$ where the number of unmatched $A_{\text{OPT}}$-vertices with minimum neighbourhood cost less than $2/\epsilon$ is at most $\epsilon / 2 \cdot |OPT|$. Then, the current matching $M$ is a $(1-\epsilon)$-approximation.
\end{corollary}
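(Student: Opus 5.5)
The plan is to bound the deficiency $|OPT| - |M|$ by the number of augmenting paths in $M \oplus OPT$, and then split these paths according to the $\cmin$ value of their starting vertex. Recall the standard fact that $M \oplus OPT$ decomposes into vertex-disjoint alternating paths and cycles. Each component other than an augmenting path (with respect to $M$) contains at least as many $M$-edges as $OPT$-edges. Hence
$$|OPT| - |M| \le \#\{\text{augmenting paths in } M \oplus OPT\}.$$
Every augmenting path has exactly one endpoint in $A$. This endpoint is unmatched in $M$ and matched in $OPT$, so it is an unmatched $A_{\text{OPT}}$-vertex, and distinct paths have distinct such endpoints. It therefore suffices to show that the number of augmenting paths is at most $\epsilon |OPT|$.

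Partition the augmenting paths into two classes according to the $\cmin$ value of their $A$-endpoint $a$.

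\textbf{Case $\cmin(a) < 2/\epsilon$.} The endpoints are distinct unmatched $A_{\text{OPT}}$-vertices with minimum neighbourhood cost below $2/\epsilon$. By the hypothesis of the corollary there are at most $\frac{\epsilon}{2}|OPT|$ of them, so this class contains at most $\frac{\epsilon}{2}|OPT|$ paths.

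\textbf{Case $\cmin(a) \ge 2/\epsilon$.} Apply the second part of Lemma~\ref{lem:aug-path-length} with the integer $k = \lceil 2/\epsilon \rceil$. Since $\cmin$ values are integers, $\cmin(a) \ge 2/\epsilon$ is equivalent to $\cmin(a) \ge \lceil 2/\epsilon\rceil$. The lemma then bounds the number of such paths by
$$\frac{|OPT|}{\lceil 2/\epsilon\rceil + 1} \le \frac{\epsilon}{2}|OPT|.$$

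Adding the two classes gives at most $\epsilon|OPT|$ augmenting paths, so $|M| \ge (1-\epsilon)|OPT|$.

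The only delicate points are bookkeeping. First, we need the lemma's counting to apply to the state at iteration $r$, which holds because the lemma is stated for the beginning of any iteration. Second, we use integrality of $\cmin$ so that the threshold $2/\epsilon$ can be rounded up to an integer $k$. Third, we need the symmetric-difference decomposition to charge every unit of deficiency to an augmenting path. No step looks like a genuine obstacle; the lemma does the real work.
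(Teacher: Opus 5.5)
Your proof is correct and follows the paper's argument: you split the augmenting paths of $M \oplus OPT$ by whether the $\cmin$ value of their $A$-endpoint is below $2/\epsilon$, bound the first class by the hypothesis and the second by the counting part of Lemma~\ref{lem:aug-path-length}, and conclude from $|OPT| - |M|$ being at most the number of augmenting paths. Applying the lemma with the integer $k = \lceil 2/\epsilon \rceil$ is a small extra bit of care that the paper skips, since it plugs in $k = 2/\epsilon$ directly.
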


\begin{proof}
 Consider the set of augmenting paths $\mathcal{A}$ in $M \oplus OPT$. Let $A' \subseteq A_{OPT} \setminus A(M)$ be the set of starting vertices of these augmenting paths in $A$. Partition $A'$ between two sets, $A^+ \subseteq A'$ being the vertices in $A'$ with minimum neighbourhood cost at least $2/\epsilon$, and $A^- = A' \setminus A^+$. By Lemma~\ref{lem:aug-path-length}, we have $|A^+| \le \frac{1}{\frac{2}{\epsilon}+1}|OPT| \le \frac{\epsilon}{2} |OPT|$. Then: 
 $$|\mathcal{A}| = |A'| = |A^+| + |A^-| \le \epsilon / 2 \cdot |OPT| + \epsilon / 2 \cdot |OPT| = \epsilon \cdot |OPT| \ . $$
 Since $|M| + |\mathcal{A}| = |OPT|$, the result follows.
\end{proof}

\begin{theorem}
 Our version of the \textsc{ALT}-algorithm listed in Algorithm~\ref{alg:1-eps} is a $(1-\epsilon)$-approximation algorithm for \textsf{MM} on bipartite graphs.
\end{theorem}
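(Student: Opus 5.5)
The plan is to reduce the theorem to Corollary~\ref{cor:approx} via a counting (progress) argument. By Property~\textbf{P1}, $|M|$ never decreases. It therefore suffices to show that in at least one of the $49/\epsilon^2$ rounds, at the start of that round, at most $\frac{\epsilon}{2}|OPT|$ unmatched $A_{\text{OPT}}$-vertices have $\cmin < 2/\epsilon$. Call such a vertex \emph{low}. The corollary then gives that $M$ is a $(1-\epsilon)$-approximation at that moment, and hence also at termination. Notably, this argument does not need the freezing mechanism, since the approximation part (Lemma~\ref{lem:aug-path-length} and the corollary) already handles unmatched vertices with large $\cmin$.

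Suppose for contradiction that in every round more than $\frac{\epsilon}{2}|OPT|$ low vertices exist at the start of the round. For each such vertex $a$ in round $r$, one of two events happens. Either $a$ stays unmatched, and by \textbf{P3} its $\cmin(a)$ increases by one from a value below $2/\epsilon$. Or $a$ gets matched in $N_r$ to some $b$ with $c(b) = \cmin(a) < 2/\epsilon$, and $c(b)$ is incremented from a value below $2/\epsilon$. So each round produces more than $\frac{\epsilon}{2}|OPT|$ events of these two types. I would bound the total number of each type over the whole run.

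\emph{Type 1 events.} By \textbf{P2}, $\cmin(a)$ never decreases for any vertex, including across periods where $a$ is matched and later unmatched again. Hence each $a \in A_{\text{OPT}}$ has its $\cmin$ incremented from a value below $2/\epsilon$ at most $\lceil 2/\epsilon\rceil$ times. This gives at most about $\frac{2}{\epsilon}|OPT|$ type-1 events in total.

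\emph{Type 2 events.} This is the step I expect to need the most care, since it must handle arbitrary $B$-vertices and matches by non-OPT vertices. Each such event increments $c(b)$ for a $b$ that becomes matched. By \textbf{P1}, every $B$-vertex that is ever matched stays matched, so the set of ever-matched $B$-vertices has size at most the final $|M| \le |OPT|$. Each such $b$ can have $c(b)$ incremented from a value below $2/\epsilon$ at most $\lceil 2/\epsilon\rceil$ times, and $c(b)$ is never decreased. Also, distinct low vertices in one round are matched to distinct $b$'s, so each type-2 event uses up one such increment. This gives at most about $\frac{2}{\epsilon}|OPT|$ type-2 events. Increments of $c(b)$ caused by vertices outside $A_{\text{OPT}}$ only use up this budget faster, so they do not hurt the bound.

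Combining, the total number of events is at most roughly $\frac{4}{\epsilon}|OPT| + O(|OPT|)$. If every one of the $R = 49/\epsilon^2$ rounds produced more than $\frac{\epsilon}{2}|OPT|$ events, the total would exceed $\frac{49}{2\epsilon}|OPT|$. For small $\epsilon$ this is a contradiction, with ample slack for the ceilings. Hence some round satisfies the hypothesis of Corollary~\ref{cor:approx}, and by monotonicity of $|M|$ the returned matching is a $(1-\epsilon)$-approximation.
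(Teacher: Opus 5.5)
Your proof is correct, and it takes a different route from the paper's. Both arguments use the same two budgets: increments of $\cmin$ on $A_{\text{OPT}}$ below the threshold, and increments of $c(b)$ at low prices on the at most $|OPT|$ ever-matched $B$-vertices. The difference is how they are combined.

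The paper splits rounds by $|N_r|$. It first claims that at most $36/\epsilon^2$ rounds have $|N_r|\ge\frac{\epsilon}{3}|OPT|$, by splitting $B(N_r)$ at price $6/\epsilon$ and citing Lemma~\ref{lem:aug-path-length} to assert that at most $\frac{\epsilon}{6}|OPT|$ unmatched $A$-vertices are saturated. It then applies \textbf{P3} in the remaining $13/\epsilon^2$ rounds, where at least $\frac{\epsilon}{6}|OPT|$ low vertices must stay unmatched. You instead charge each low vertex in each round individually. The observation $c(b)=\cmin(a)<2/\epsilon$ for $ab\in H_r$ guarantees that a matched low vertex always uses up a cheap increment, so no case split on $|N_r|$ is needed.

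This buys more than shorter bookkeeping. Your argument never has to control how often $B$-vertices are rematched at high prices, which is exactly what freezing used to do. The paper's first step does need such control, and Lemma~\ref{lem:aug-path-length} does not supply it. The lemma bounds only augmenting paths of $M\oplus OPT$ that start at saturated $A_{\text{OPT}}$-vertices, not unmatched saturated $A$-vertices in general. On a star with many leaves, $|N_r|=|OPT|=1$ in every round, so the bound $X\le 36/\epsilon^2$ fails once freezing is removed. In your route, Lemma~\ref{lem:aug-path-length} enters only through Corollary~\ref{cor:approx}, where it applies as stated. Your count also shows that roughly $8/\epsilon^2+4/\epsilon$ rounds already suffice.
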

\begin{proof}
Let $M_{\text{out}}$ be the final matching output by the algorithm, and let $B_{\text{out}} = B(M_\text{out})$. 
In this proof, we denote by $c^r(b)$ the quantity $c(b)$ in the beginning of iteration $r$. 


We will first argue that the number of iterations $X$ where the matching $N_r$ computed is of size at least $|N_r| \ge \frac{\epsilon}{3} |OPT|$ is at most $36 / \epsilon^2$. To this end, consider such an iteration $r$ and partition the $B(N_r)$-vertices into $B_{<}(N_r)$ and $B_{\ge}(N_r)$ such that each $b \in B_{<}(N_r)$ is such that $c^r(b) < 6/\epsilon$ and each $b \in B_{\ge}(N_r)$ is such that $c^r(b) \ge 6/\epsilon$. Observe that, by Lemma~\ref{lem:aug-path-length}, there are at most $(\epsilon / 6) \cdot |OPT|$ unmatched and saturated $A$-vertices. Hence, at most $(\epsilon / 6) \cdot |OPT|$ vertices in $B_{\ge}(N_r)$ can be matched in iteration $r$ and increase their costs since only unmatched saturated $A$-vertices can match to these vertices. This in turn implies that at least $(\epsilon / 6) \cdot |OPT|$ vertices of $B_{<}(N_r)$ are matched in the current iteration. 

Observe further that $B(N_r) \subseteq B_{\text{out}}$, which follows from Property \textbf{P1}. Over the course of the $X$ iterations, we have thus seen at least 
\begin{align}
X \cdot \frac{\epsilon}{6} \cdot |OPT| \label{eqn:391}    
\end{align} 
increases of the costs of $B_{\text{out}}$-vertices $b$ when they had a current value $c(b) < \frac{6}{\epsilon}$. This number, however, is also trivially bounded from above by $\frac{6}{\epsilon} \cdot |B_{\text{out}}| \le \frac{6}{\epsilon} \cdot |OPT|$, which together with the lower bound in \ref{eqn:391} implies that $X \le \frac{36}{\epsilon^2}$ as desired.

The previous claim implies that the number of iterations where $|N_r| \le \frac{\epsilon}{3} |OPT|$ is at least $\frac{49}{\epsilon^2} - \frac{36}{\epsilon^2} = \frac{13}{\epsilon^2}$ (recall that Algorithm~\ref{alg:1-eps} overall executes $\frac{49}{\epsilon^2}$ iterations). 
Suppose for the sake of a contradiction that $M_{\text{out}}$ is not a $(1-\epsilon)$-approximation. Then, by Corollary~\ref{cor:approx}, in each iteration, there are at least $\frac{\epsilon}{2} \cdot |OPT|$ unmatched vertices $a \in A_{\text{OPT}}$  with $\cmin(a) < 2 / \epsilon$. Since the matching found in any of these $13/ \epsilon^2$ iterations is of size at most $\epsilon/3 \cdot |OPT|$, in each such iteration there are at least $\frac{\epsilon}{2} \cdot |OPT| - \frac{\epsilon}{3} \cdot |OPT| = \frac{\epsilon}{6}  \cdot |OPT|$ vertices of $A_{\text{OPT}}$ with min-cost smaller than  $2 / \epsilon$ that are not matched. Observe that, by Property~\textbf{P3}, each such vertex increases its $\cmin$ value. Hence, over the course of the algorithm, we observe at least 
\begin{align}
13 / \epsilon^2 \cdot \frac{\epsilon}{6} \cdot |OPT| = \frac{13}{6 \epsilon} |OPT| \label{eqn:555}
\end{align} 
increases in the min-cost value of $A_{\text{OPT}}$ vertices when their current min-cost was below $2/\epsilon$. This quantity is trivially bounded by $\frac{2}{\epsilon} \cdot |A_{\text{OPT}}| = \frac{2}{\epsilon} \cdot |OPT|$, which yields a contradiction to the lower bound established in Equation~\ref{eqn:555}. Hence, $M_{\text{out}}$ is a $(1-\epsilon)$-approximation.
\end{proof}

\textbf{Why/how does \textsc{ALT} find augmenting paths?}
Given a matching $M$, traditional augmenting paths-based matching algorithms determine the edges of an augmenting path $\mathcal{P}$ and then swap the edges of $M$ with the edges of $OPT$ along $\mathcal{P}$, which increases the size of $M$ by $1$. The streaming algorithm by Eggert et al. \cite{ekms12} searches for many augmenting paths in parallel and computes in this fashion a set of disjoint augmenting paths that can be augmented simultaneously. Their search is highly structured and based on DFS explorations including backtracking up to a depth of $O(\frac{1}{\epsilon})$ starting at every unmatched vertex.

The \textsc{ALT}-algorithm can also be regarded as one that searches for augmenting paths. In each iteration, the currently unmatched $A$-vertices form the beginning of augmenting paths. Similar to Lemma~\ref{lem:aug-path-length}, it can be seen that, for every unmatched vertex $a \in A$, the quantity $\cmin(a)$ gives a lower bound on the length of any augmenting path starting at $a$. The fact that the algorithm only considers neighbours $b$ of $a$ as potential partners with minimal $c(b)$ value indicates that the search for augmenting paths is directed towards the shortest possible augmenting paths. 
Interestingly, the algorithm does not wait with swapping edges in and out of the matching until completed augmenting paths are found as the matching $N_r$ is immediately incorporated into the matching $M$. This is beneficial from the perspective of augmenting paths since if $a$ was matched to $b$ in $N_r$ and the vertex $a'$, who was the previous mate of $b$, was unmatched in this process then $a'$ is now in a {\em stronger} position to complete the augmenting path since its $\cmin$ value cannot be larger than the $\cmin$ value of $a$. One may now trace this sequence of swaps with increasingly smaller $\cmin$ values in order to identify completed augmenting paths.


\section{Hard Instance} \label{sec:lb}

We now show that, even on path graphs, the \textsc{ALT}-algorithm may require $\Omega(n^2)$ iterations to obtain a maximum matching. This directly translates to a proof that the number of iterations required for a $(1-\epsilon)$-approximation to \textsf{MM} is $\Theta(\frac{1}{\epsilon^2})$.

We use the family of path graphs on $6x-2$ vertices for any $x \geq 2$, with $3x-1$ $A$ and $B$ vertices, respectively. We label the $B$-vertex that is an endpoint of the path as $b_0$. We label the next $B$-vertex along the path as $b_1$, and label the rest of the $B$-vertices by $b_2,...,b_{3x-2}$ in the same manner. We then label the $A$-vertices such that $a_i$ is adjacent to $b_i$ and $b_{i+1}$. The vertex $a_{3x-2}$ is the $A$-vertex that is only adjacent to $b_{3x-2}$. This is the other endpoint of the path. We split the $A$-vertices into three sets $A_0, A_1$ and $A_2$ such that $A_k := \{a_\ell : \ell \equiv k \mod{3}\}$. We split the $B$-vertices into $B_0, B_1, B_2$ in the same way.

We note that the only maximum matching in this graph is the matching between $a_k$ and $b_k$ for each $k$ from $0$ to $3x-2$. We set $\epsilon = \frac{1}{3x}$ so that only the singular maximum matching is a $(1-\epsilon)$-approximate matching. 

    We now describe the four `setup' rounds. An example of a path on 16 vertices after each of these rounds is given in Figure~\ref{fig:setup}. 

    \begin{enumerate}
        \item[\textbf{Round 1:}] Let the matching returned be the set of edges between $A_0$ and $B_1$, as well as the edges between $A_2$ and $B_2$. We note that this matching is maximal. The vertices in $A_1$ and $B_0$ remain unmatched. 
        \item[\textbf{Round 2:}] Each vertex in $A_1$ has two neighbours of cost 1. When a vertex has multiple min-cost neighbours, we will take the $B$-vertex of larger index, represented by the diagonal edges in Figure~\ref{fig:setup}. We add all of the edges between $A_1$ and $B_2$. Each vertex in $B_2$ has its cost increased to 2, and the vertices in $A_2$ become unmatched. Additionally, the edge between $a_{3x-2}$ and $b_{3x-2}$ is matched.
        \item[\textbf{Round 3:}]Each of the vertices in $A_2$ is adjacent to an unmatched vertex in $B_0$, each of which are added to the matching. The only $A$-vertex that is not matched after this round is $a_{3x-3}$, as it is blocked from matching to $b_{3x-3}$ by $a_{3x-4}$.
        \item[\textbf{Round 4:}]$a_{3x-3}$ must match to $b_{3x-3}$, $a_{3x-4}$ becomes unmatched.
    \end{enumerate}

    \begin{figure}[h]
        \centering
        
        \scalebox{.8}{
            \begin{tikzpicture}
    
                \node[circle, draw, minimum size=1.5em] (A1) at (0,0) {};
                \node[circle, draw, minimum size=1.5em, fill=black!20] (A2) at (0,-2) {0};
                \node[circle, draw, minimum size=1.5em, fill=black!20] (B1) at (2,0) {};
                \node[circle, draw, minimum size=1.5em] (B2) at (2,-2) {1};
                \node[circle, draw, minimum size=1.5em] (C1) at (4,0) {};
                \node[circle, draw, minimum size=1.5em] (C2) at (4,-2) {1};
                \node[circle, draw, minimum size=1.5em] (D1) at (6,0) {};
                \node[circle, draw, minimum size=1.5em, fill=black!20] (D2) at (6,-2) {0};
                \node[circle, draw, minimum size=1.5em, fill=black!20] (E1) at (8,0) {};
                \node[circle, draw, minimum size=1.5em] (E2) at (8,-2) {1};
                \node[circle, draw, minimum size=1.5em] (F1) at (10,0) {};
                \node[circle, draw, minimum size=1.5em] (F2) at (10,-2) {1};
                \node[circle, draw, minimum size=1.5em] (G1) at (12,0) {};
                \node[circle, draw, minimum size=1.5em, fill=black!20] (G2) at (12,-2) {0};
                \node[circle, draw, minimum size=1.5em, fill=black!20] (H1) at (14,0) {};
                \node[circle, draw, minimum size=1.5em] (H2) at (14,-2) {1};
    
                \node[] (A) at (-1,0) {A};
                \node[] (B) at (-1,-2) {B};
                \node[] (R1) at (0,-3) {\Large After Round 1};
            
                \draw[] (A1) -- (A2);
                \draw[] (B1) -- (B2);
                \draw[color=red, very thick] (A1) -- (B2);
                \draw[] (B1) -- (C2);
                \draw[color=red, very thick] (C1) -- (C2);
                \draw[] (C1) -- (D2);
                \draw[] (D1) -- (D2);
                \draw[color=red, very thick] (D1) -- (E2);
                \draw[] (E1) -- (E2);
                \draw[] (E1) -- (F2);
                \draw[color=red, very thick] (F1) -- (F2);
                \draw[] (F1) -- (G2);
                \draw[] (G1) -- (G2);
                \draw[color=red, very thick] (G1) -- (H2);
                \draw[] (H1) -- (H2);
            
            \end{tikzpicture}
        }
        
        \vspace{10px}

        \scalebox{.8}{
            \begin{tikzpicture}

            \node[circle, draw, minimum size=1.5em] (A1) at (0,0) {};
            \node[circle, draw, minimum size=1.5em, fill=black!20] (A2) at (0,-2) {0};
            \node[circle, draw, minimum size=1.5em] (B1) at (2,0) {};
            \node[circle, draw, minimum size=1.5em] (B2) at (2,-2) {1};
            \node[circle, draw, minimum size=1.5em, fill=black!20] (C1) at (4,0) {};
            \node[circle, draw, minimum size=1.5em] (C2) at (4,-2) {2};
            \node[circle, draw, minimum size=1.5em] (D1) at (6,0) {};
            \node[circle, draw, minimum size=1.5em, fill=black!20] (D2) at (6,-2) {0};
            \node[circle, draw, minimum size=1.5em] (E1) at (8,0) {};
            \node[circle, draw, minimum size=1.5em] (E2) at (8,-2) {1};
            \node[circle, draw, minimum size=1.5em, fill=black!20] (F1) at (10,0) {};
            \node[circle, draw, minimum size=1.5em] (F2) at (10,-2) {2};
            \node[circle, draw, minimum size=1.5em, fill=black!20] (G1) at (12,0) {};
            \node[circle, draw, minimum size=1.5em, fill=black!20] (G2) at (12,-2) {0};
            \node[circle, draw, minimum size=1.5em] (H1) at (14,0) {};
            \node[circle, draw, minimum size=1.5em] (H2) at (14,-2) {2};

            \node[] (A) at (-1,0) {A};
            \node[] (B) at (-1,-2) {B};
            \node[] (R1) at (0,-3) {\Large After Round 2};
        
            \draw[] (A1) -- (A2);
            \draw[] (B1) -- (B2);
            \draw[color=red, very thick] (A1) -- (B2);
            \draw[color=red, very thick] (B1) -- (C2);
            \draw[] (C1) -- (C2);
            \draw[] (C1) -- (D2);
            \draw[] (D1) -- (D2);
            \draw[color=red, very thick] (D1) -- (E2);
            \draw[] (E1) -- (E2);
            \draw[color=red, very thick] (E1) -- (F2);
            \draw[] (F1) -- (F2);
            \draw[] (F1) -- (G2);
            \draw[] (G1) -- (G2);
            \draw[] (G1) -- (H2);
            \draw[color=red, very thick] (H1) -- (H2);

            \end{tikzpicture}
        }
        
        \vspace{10px}

        \scalebox{.8}{
            \begin{tikzpicture}
    
                \node[circle, draw, minimum size=1.5em] (A1) at (0,0) {};
                \node[circle, draw, minimum size=1.5em, fill=black!20] (A2) at (0,-2) {0};
                \node[circle, draw, minimum size=1.5em] (B1) at (2,0) {};
                \node[circle, draw, minimum size=1.5em] (B2) at (2,-2) {1};
                \node[circle, draw, minimum size=1.5em] (C1) at (4,0) {};
                \node[circle, draw, minimum size=1.5em] (C2) at (4,-2) {2};
                \node[circle, draw, minimum size=1.5em] (D1) at (6,0) {};
                \node[circle, draw, minimum size=1.5em] (D2) at (6,-2) {1};
                \node[circle, draw, minimum size=1.5em] (E1) at (8,0) {};
                \node[circle, draw, minimum size=1.5em] (E2) at (8,-2) {1};
                \node[circle, draw, minimum size=1.5em] (F1) at (10,0) {};
                \node[circle, draw, minimum size=1.5em] (F2) at (10,-2) {2};
                \node[circle, draw, minimum size=1.5em, fill=black!20] (G1) at (12,0) {};
                \node[circle, draw, minimum size=1.5em] (G2) at (12,-2) {1};
                \node[circle, draw, minimum size=1.5em] (H1) at (14,0) {};
                \node[circle, draw, minimum size=1.5em] (H2) at (14,-2) {2};

                \node[] (A) at (-1,0) {A};
                \node[] (B) at (-1,-2) {B};
                \node[] (R1) at (0,-3) {\Large After Round 3};
            
                \draw[] (A1) -- (A2);
                \draw[] (B1) -- (B2);
                \draw[color=red, very thick] (A1) -- (B2);
                \draw[color=red, very thick] (B1) -- (C2);
                \draw[] (C1) -- (C2);
                \draw[color=red, very thick] (C1) -- (D2);
                \draw[] (D1) -- (D2);
                \draw[color=red, very thick] (D1) -- (E2);
                \draw[] (E1) -- (E2);
                \draw[color=red, very thick] (E1) -- (F2);
                \draw[] (F1) -- (F2);
                \draw[color=red, very thick] (F1) -- (G2);
                \draw[] (G1) -- (G2);
                \draw[] (G1) -- (H2);
                \draw[color=red, very thick] (H1) -- (H2);
            
            \end{tikzpicture}
            
        }

        \vspace{10px}

        \scalebox{.8}{
            \begin{tikzpicture}
    
                \node[circle, draw, minimum size=1.5em] (A1) at (0,0) {};
                \node[circle, draw, minimum size=1.5em, fill=black!20] (A2) at (0,-2) {0};
                \node[circle, draw, minimum size=1.5em] (B1) at (2,0) {};
                \node[circle, draw, minimum size=1.5em] (B2) at (2,-2) {1};
                \node[circle, draw, minimum size=1.5em] (C1) at (4,0) {};
                \node[circle, draw, minimum size=1.5em] (C2) at (4,-2) {2};
                \node[circle, draw, minimum size=1.5em] (D1) at (6,0) {};
                \node[circle, draw, minimum size=1.5em] (D2) at (6,-2) {1};
                \node[circle, draw, minimum size=1.5em] (E1) at (8,0) {};
                \node[circle, draw, minimum size=1.5em] (E2) at (8,-2) {1};
                \node[circle, draw, minimum size=1.5em, fill=black!20] (F1) at (10,0) {};
                \node[circle, draw, minimum size=1.5em] (F2) at (10,-2) {2};
                \node[circle, draw, minimum size=1.5em] (G1) at (12,0) {};
                \node[circle, draw, minimum size=1.5em] (G2) at (12,-2) {2};
                \node[circle, draw, minimum size=1.5em] (H1) at (14,0) {};
                \node[circle, draw, minimum size=1.5em] (H2) at (14,-2) {2};

                \node[] (A) at (-1,0) {A};
                \node[] (B) at (-1,-2) {B};
                \node[] (R1) at (0,-3) {\Large After Round 4};
            
                \draw[] (A1) -- (A2);
                \draw[] (B1) -- (B2);
                \draw[color=red, very thick] (A1) -- (B2);
                \draw[color=red, very thick] (B1) -- (C2);
                \draw[] (C1) -- (C2);
                \draw[color=red, very thick] (C1) -- (D2);
                \draw[] (D1) -- (D2);
                \draw[color=red, very thick] (D1) -- (E2);
                \draw[] (E1) -- (E2);
                \draw[color=red, very thick] (E1) -- (F2);
                \draw[] (F1) -- (F2);
                \draw[] (F1) -- (G2);
                \draw[color=red, very thick] (G1) -- (G2);
                \draw[] (G1) -- (H2);
                \draw[color=red, very thick] (H1) -- (H2);
            
            \end{tikzpicture}
            
        }
        
        \caption{The state of a path on 16 vertices after each setup round. Edges in red are part of the matching. Unfilled vertices are matched, gray vertices are unmatched. The numbers in the $B$-vertices represent their cost. \label{fig:setup}}
    \end{figure}
    After the fourth round, the graph is in the following state. Each $a_k$ for $k \leq 3x - 5$ is matched to $b_{k+1}$, $a_{3x-3}$ is matched to $b_{3x-3}$ and $a_{3x-2}$ is matched to $b_{3x-2}$. $a_{3x-4}$ is unmatched. All vertices in $B_0$ have cost 1, except $b_0$ which has cost 0 and $b_{3x-3}$ which has cost 2. All vertices in $B_1$ have cost 1, except $b_{3x-2}$ which has cost 2. All vertices in $B_2$ have cost 2. 

    During the remainder of the execution of the algorithm, we will use the same tie-break property as described in the second round. That is, the algorithm will take the edge connecting the unmatched $A$-vertex to the $B$-vertex of higher index.

    To lower bound the number of rounds required for a maximum matching, we will show that using the tie-break condition above causes the algorithm to repeatedly re-match the same vertices multiple times. We will call each of these processes a \textit{sweep}. 
    
    The conditions required for a sweep to begin are the following, where $a_k$ is the unmatched vertex.
    
    \begin{enumerate}
        \item[\textbf{C1:}] $b_k$ has cost at least that of $b_{k+1}$.
        \item[\textbf{C2:}] For all $i > k$, the cost of $b_{i+1}$ is either the same as the cost of $b_i$, or one more.
    \end{enumerate}

    We note that both of these conditions hold for the state of the matching after the fourth round. This means that a sweep will begin after the setup rounds. We now describe what occurs during a sweep. An example of a sweep is given in Figure~\ref{fig:sweep}.
    
    The algorithm matches $a_k$ to $b_{k+1}$, the cost of $b_{k+1}$ increases by 1, and $a_{k+1}$ becomes unmatched. By Condition \textbf{C2}, the cost of $b_{k+2}$ was at most the cost of $b_{k+1}$ plus one before this round. Therefore after this round the cost of $b_{k+2}$ is at most the cost of $b_{k+1}$ afterwards. This is Condition \textbf{C1} for unmatched vertex $a_{k+1}$. All the other costs are unchanged, so Condition \textbf{C2} remains true. This means that the sweep will continue by matching $a_{k+1}$ to $b_{k+2}$. This must then continue all the way down the path until the endpoint $a_{3x-2}$ becomes unmatched. We will call this process the \textit{rightward} portion of the sweep.

    Once $a_{3x-2}$ becomes unmatched, the only option for the algorithm in the next round is to immediately rematch it to $b_{3x-2}$, and so $a_{3x-3}$ becomes unmatched again. At the start of the sweep, $b_{3x-2}$ had cost at most that of $b_{3x-3}$ plus 1 by Condition \textbf{C2}. At this point the cost of $b_{3x-3}$ has increased by 1 and the cost of $b_{3x-2}$ has increased by 2. Thus, the cost of $b_{3x-2}$ must now be strictly larger than the cost of $b_{3x-3}$, and so $a_{3x-3}$ must match to $b_{3x-3}$ and increase the cost of $b_{3x-3}$ by 1. We can now apply the same argument again to the state of the matching after this round, so $a_{3x-4}$ matches to $b_{3x-4}$. This process, matching $a_k$ to $b_k$, must therefore continue down the path until $a_k$ becomes unmatched again. We will call this the \textit{leftward} portion of the sweep. The state of the matching is now the same as before the sweep, except with all the costs of $b_i$ for $i > k$ increased by 2. 

        \begin{figure}[h]
        \centering
$ $ 
        \scalebox{.8}{
            \begin{tikzpicture}
    
                \node[circle, draw, minimum size=1.5em] (A1) at (0,0) {};
                \node[circle, draw, minimum size=1.5em, fill=black!20] (A2) at (0,-2) {0};
                \node[circle, draw, minimum size=1.5em] (B1) at (2,0) {};
                \node[circle, draw, minimum size=1.5em] (B2) at (2,-2) {1};
                \node[circle, draw, minimum size=1.5em, fill=black!20] (C1) at (4,0) {};
                \node[circle, draw, minimum size=1.5em] (C2) at (4,-2) {2};
                \node[circle, draw, minimum size=1.5em] (D1) at (6,0) {};
                \node[circle, draw, minimum size=1.5em] (D2) at (6,-2) {2};
                \node[circle, draw, minimum size=1.5em] (E1) at (8,0) {};
                \node[circle, draw, minimum size=1.5em] (E2) at (8,-2) {2};
                \node[circle, draw, minimum size=1.5em] (F1) at (10,0) {};
                \node[circle, draw, minimum size=1.5em] (F2) at (10,-2) {3};
                \node[circle, draw, minimum size=1.5em] (G1) at (12,0) {};
                \node[circle, draw, minimum size=1.5em] (G2) at (12,-2) {4};
                \node[circle, draw, minimum size=1.5em] (H1) at (14,0) {};
                \node[circle, draw, minimum size=1.5em] (H2) at (14,-2) {4};

                \node[] (A) at (-1,0) {A};
                \node[] (B) at (-1,-2) {B};
                \node[] (R1) at (1,-3) {\Large Before the sweep};
            
                \draw[] (A1) -- (A2);
                \draw[] (B1) -- (B2);
                \draw[color=red, very thick] (A1) -- (B2);
                \draw[color=red, very thick] (B1) -- (C2);
                \draw[] (C1) -- (C2);
                \draw[] (C1) -- (D2);
                \draw[color=red, very thick] (D1) -- (D2);
                \draw[] (D1) -- (E2);
                \draw[color=red, very thick] (E1) -- (E2);
                \draw[] (E1) -- (F2);
                \draw[color=red, very thick] (F1) -- (F2);
                \draw[] (F1) -- (G2);
                \draw[color=red, very thick] (G1) -- (G2);
                \draw[] (G1) -- (H2);
                \draw[color=red, very thick] (H1) -- (H2);
            
            \end{tikzpicture}
        }
        
        \vspace{10px}
        
        \scalebox{.8}{
            \begin{tikzpicture}
    
                \node[circle, draw, minimum size=1.5em] (A1) at (0,0) {};
                \node[circle, draw, minimum size=1.5em, fill=black!20] (A2) at (0,-2) {0};
                \node[circle, draw, minimum size=1.5em] (B1) at (2,0) {};
                \node[circle, draw, minimum size=1.5em] (B2) at (2,-2) {1};
                \node[circle, draw, minimum size=1.5em] (C1) at (4,0) {};
                \node[circle, draw, minimum size=1.5em] (C2) at (4,-2) {2};
                \node[circle, draw, minimum size=1.5em] (D1) at (6,0) {};
                \node[circle, draw, minimum size=1.5em] (D2) at (6,-2) {3};
                \node[circle, draw, minimum size=1.5em] (E1) at (8,0) {};
                \node[circle, draw, minimum size=1.5em] (E2) at (8,-2) {3};
                \node[circle, draw, minimum size=1.5em] (F1) at (10,0) {};
                \node[circle, draw, minimum size=1.5em] (F2) at (10,-2) {4};
                \node[circle, draw, minimum size=1.5em] (G1) at (12,0) {};
                \node[circle, draw, minimum size=1.5em] (G2) at (12,-2) {5};
                \node[circle, draw, minimum size=1.5em, fill=black!20] (H1) at (14,0) {};
                \node[circle, draw, minimum size=1.5em] (H2) at (14,-2) {5};

                \node[] (A) at (-1,0) {A};
                \node[] (B) at (-1,-2) {B};
                \node[] (R1) at (2.1,-3) {\Large After the rightward portion};
            
                \draw[] (A1) -- (A2);
                \draw[] (B1) -- (B2);
                \draw[color=red, very thick] (A1) -- (B2);
                \draw[color=red, very thick] (B1) -- (C2);
                \draw[] (C1) -- (C2);
                \draw[color=red, very thick] (C1) -- (D2);
                \draw[] (D1) -- (D2);
                \draw[color=red, very thick] (D1) -- (E2);
                \draw[] (E1) -- (E2);
                \draw[color=red, very thick] (E1) -- (F2);
                \draw[] (F1) -- (F2);
                \draw[color=red, very thick] (F1) -- (G2);
                \draw[] (G1) -- (G2);
                \draw[color=red, very thick] (G1) -- (H2);
                \draw[] (H1) -- (H2);
            
            \end{tikzpicture}
        }

        \vspace{10px}
        
        \scalebox{.8}{
            \begin{tikzpicture}
    
                \node[circle, draw, minimum size=1.5em] (A1) at (0,0) {};
                \node[circle, draw, minimum size=1.5em, fill=black!20] (A2) at (0,-2) {0};
                \node[circle, draw, minimum size=1.5em] (B1) at (2,0) {};
                \node[circle, draw, minimum size=1.5em] (B2) at (2,-2) {1};
                \node[circle, draw, minimum size=1.5em, fill=black!20] (C1) at (4,0) {};
                \node[circle, draw, minimum size=1.5em] (C2) at (4,-2) {2};
                \node[circle, draw, minimum size=1.5em] (D1) at (6,0) {};
                \node[circle, draw, minimum size=1.5em] (D2) at (6,-2) {4};
                \node[circle, draw, minimum size=1.5em] (E1) at (8,0) {};
                \node[circle, draw, minimum size=1.5em] (E2) at (8,-2) {4};
                \node[circle, draw, minimum size=1.5em] (F1) at (10,0) {};
                \node[circle, draw, minimum size=1.5em] (F2) at (10,-2) {5};
                \node[circle, draw, minimum size=1.5em] (G1) at (12,0) {};
                \node[circle, draw, minimum size=1.5em] (G2) at (12,-2) {6};
                \node[circle, draw, minimum size=1.5em] (H1) at (14,0) {};
                \node[circle, draw, minimum size=1.5em] (H2) at (14,-2) {6};

                \node[] (A) at (-1,0) {A};
                \node[] (B) at (-1,-2) {B};
                \node[] (R1) at (2,-3) {\Large After the leftward portion};
            
                \draw[] (A1) -- (A2);
                \draw[] (B1) -- (B2);
                \draw[color=red, very thick] (A1) -- (B2);
                \draw[color=red, very thick] (B1) -- (C2);
                \draw[] (C1) -- (C2);
                \draw[] (C1) -- (D2);
                \draw[color=red, very thick] (D1) -- (D2);
                \draw[] (D1) -- (E2);
                \draw[color=red, very thick] (E1) -- (E2);
                \draw[] (E1) -- (F2);
                \draw[color=red, very thick] (F1) -- (F2);
                \draw[] (F1) -- (G2);
                \draw[color=red, very thick] (G1) -- (G2);
                \draw[] (G1) -- (H2);
                \draw[color=red, very thick] (H1) -- (H2);
            
            \end{tikzpicture}
        }
        
        \caption{An example of a path during stages of a sweep. Edges in red are part of the matching. Unfilled vertices are matched, grey vertices are unmatched. The numbers in the $B$-vertices represent their cost. \label{fig:sweep}}
        \end{figure}

    We now show that sweeps are triggered regularly during the execution of the algorithm. After a sweep has finished, $a_k$ matches to $b_k$ as the cost of $b_{k+1}$ has increased by 2 since the start of the last sweep. The algorithm only ever triggers sweeps at $a_k$ after $a_{k+1}$ has just been unmatched, so the vertices with indices less than $k$ have not been altered since the end of the setup rounds. Therefore, in the next round $a_{k-1}$ will be matched to $b_{k-1}$ as it has cost 1 compared to cost 3 for $b_k$. In the subsequent round, $a_{k-2}$ will be matched to $b_{k-2}$ as it has cost 1 while $b_{k-1}$ now has cost 2. However, in the next round $a_{k-3}$ has both neighbours with cost 2. This satisfies Condition \textbf{C1} for a sweep to begin. Condition \textbf{C2} is also satisfied. This is because the $B$-vertices with indices greater than $k$ have had each of their costs increased by 2 since the start of the previous sweep, and this does not affect this condition holding. Furthermore, the items from $b_{k-2}$ to $b_{k+1}$ have costs $2,2,3,4$ which also satisfy this condition. Therefore a new sweep must start at $a_{k-3}$.

    This means that a sweep is triggered each time we unmatch a vertex in $A_2$. We can use the lengths of these sweeps to lower bound the number of rounds required for a maximum matching.

    \begin{theorem}\label{lem:lower_max}
        The \textsc{ALT}-algorithm requires $\Omega(n^2)$ rounds in order to return a maximum matching.
    \end{theorem}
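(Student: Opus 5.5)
We prove Theorem~\ref{lem:lower_max} by counting the rounds spent inside sweeps. We rely on the sweep analysis given just before the statement, for the path on $n = 6x-2$ vertices with the stated tie-breaking rule.

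\textbf{Step 1: the algorithm is not done before the sweeps finish.} We first observe that as long as some $A$-vertex $a_k$ with $k \ge 1$ is unmatched, the matching is not the unique maximum matching $\{a_kb_k\}$. Indeed, after the setup rounds exactly one $A$-vertex is unmatched, and it is never $a_0$ at this stage, since $a_0$ keeps its partner $b_1$ until the unmatched vertex reaches the left end. So it suffices to lower bound the number of rounds until the unmatched vertex first reaches $a_0$ (or the matching becomes $\{a_kb_k\}_k$).

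\textbf{Step 2: sweep positions.} From the discussion preceding the theorem, after setup the unmatched vertex is $a_{3x-4} \in A_2$ and a sweep starts there. After a sweep starting at $a_k$ (with $k \equiv 2 \pmod 3$) the process is as follows. Two rounds match $a_k$ to $b_k$ and $a_{k-1}$ to $b_{k-1}$. One further round matches $a_{k-2}$ to $b_{k-2}$, and then $a_{k-3}$ becomes unmatched. Conditions \textbf{C1} and \textbf{C2} hold there, so a new sweep starts at $a_{k-3}$.

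By induction on $j$, the $j$-th sweep starts at $a_{3x-4-3(j-1)}$, for $j = 1,\dots,x-1$. This yields about $x-1 = \Theta(n)$ sweeps. For the induction we use the invariant argued in the text: vertices with index below the current sweep start are untouched since setup, and costs to the right only increase uniformly by $2$ per sweep, which preserves \textbf{C2}.

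\textbf{Step 3: length of a sweep.} A sweep starting at $a_k$ consists of a rightward portion of $3x-2-k$ rounds (one rematch per round, from $a_k$ up to the moment $a_{3x-2}$ is unmatched). It is followed by a leftward portion of about the same length, so it takes at least $3x-2-k$ rounds. For the $j$-th sweep, $k = 3x-1-3j$, so it takes at least $3j-1$ rounds. Summing over $j=1,\dots,x-1$ gives
\[
\sum_{j=1}^{x-1} (3j-1) = \Omega(x^2) = \Omega(n^2)
\]
rounds, during all of which some $a_k$ with $k\ge 1$ is unmatched. Hence the matching is not maximum.

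\textbf{Main obstacle.} The main difficulty is making rigorous the induction in Step 2. Each new sweep starting point must again satisfy \textbf{C1} and \textbf{C2}, and the region left of the sweep must be unaffected. In particular, the boundary configuration of costs $2,2,3,4$ around $b_{k-2},\dots,b_{k+1}$ must recur at each step. I would state an explicit invariant: before the $j$-th sweep, give the cost profile of all $b_i$ and the exact matching. I would then verify that one sweep plus the three transition rounds maps this invariant for $j$ to the one for $j+1$.

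Finally, since $\epsilon = \frac{1}{3x}$ forces the algorithm to find the maximum matching, we get $\Omega(1/\epsilon^2)$ iterations.
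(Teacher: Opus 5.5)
Your proposal is correct and follows essentially the same route as the paper: you sum the lengths of the sweeps starting at each vertex of $A_2$, each of which is linear in the distance to the right endpoint, to get $\Omega(x^2)=\Omega(n^2)$. The differences are cosmetic: you count only the rightward half of each sweep, and you sum over the $x-1$ genuine sweep starts $a_{3x-4},\dots,a_2$ (the paper's sum has $x$ terms, one corresponding to a nonexistent start at index $-1$); neither affects the bound.
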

    \begin{proof}
        Recall that the algorithm performs a sweep starting at each vertex in $A_2$. If a sweep starts at $a_k$, it matches each $b_i$ for $i > k$ exactly twice during the sweep. The number of passes required for each of the sweeps is therefore $2((3x-2) - k)$. Summing up the lengths of each of the sweeps, we find a lower bound for the number of passes required.
        \begin{align*}
            2\sum_{i = 0}^{x-1}(3x-2)-(3i-1) = 2\sum_{i=1}^x3i-1 \in \Omega(x^2) = \Omega(n^2).
        \end{align*}
    \end{proof}

We can now use this result to bound the number of passes required for a $(1-\epsilon)$-approximation.

\begin{corollary}
    The \textsc{ALT}-algorithm requires $\Omega(\frac{1}{\epsilon^2})$ rounds in order to return a $(1-\epsilon)$ approximation to a maximum matching, with or without the freezing mechanism.
\end{corollary}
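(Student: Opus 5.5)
We derive the corollary from Theorem~\ref{lem:lower_max} by using the path instance on $n = 6x-2$ vertices with $\epsilon = \frac{1}{3x}$, where $n = \Theta(1/\epsilon)$. The first step is to show that for this choice of $\epsilon$ the only $(1-\epsilon)$-approximate matching is the unique maximum matching. The maximum matching $\{a_kb_k\}$ has size $3x-1$. Any non-maximum matching has size at most $3x-2$, and
$$\frac{3x-2}{3x-1} = 1 - \frac{1}{3x-1} < 1 - \frac{1}{3x} = 1-\epsilon .$$
Hence returning a $(1-\epsilon)$-approximation is equivalent to returning the maximum matching. Also, since $M$ never decreases (Property~\textbf{P1}), the algorithm can output it only after the matching has become perfect.

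The second step invokes Theorem~\ref{lem:lower_max}. Its sweep analysis shows that, with the stated tie-breaking, $M$ stays non-maximum until all sweeps starting at the $A_2$-vertices have completed. That takes $2\sum_{i=1}^{x}(3i-1) = \Omega(x^2)$ rounds. With $x = \frac{1}{3\epsilon}$ this is $\Omega(\frac{1}{\epsilon^2})$ rounds. To make "the algorithm requires" precise, I would argue it for the run using these tie-breaks, with $N_r$ chosen adversarially among maximal matchings of $H_r$. So with $49/\epsilon^2$ rounds the guarantee cannot be beaten asymptotically, and any round budget of $o(1/\epsilon^2)$ outputs a non-maximum matching on this instance.

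The main obstacle is the clause ``with or without the freezing mechanism.'' With freezing, a $B$-vertex whose cost exceeds $3/\epsilon = 9x$ is removed from $H_r$, which could alter the sweeps. I would check that during the $\Omega(x^2)$ rounds we count, costs stay below the threshold, or else restrict to a prefix of the execution where they do. A sweep starting at $a_k$ raises the cost of each $b_i$ with $i>k$ by $2$. Since $b_{3x-2}$ participates in all $x$ sweeps, its cost reaches about $2x+O(1) < 9x$. More generally, every cost is at most $2+2x \le 9x$ for $x\ge 1$. Thus no vertex is ever frozen before the maximum matching is reached, and the original algorithm behaves identically on this instance. The lower bound then transfers verbatim, which completes the plan.
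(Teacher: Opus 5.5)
Your proposal is correct and follows the paper's proof. Both use the same path instance, the same observation that only the unique perfect matching is a $(1-\epsilon)$-approximation, the appeal to Theorem~\ref{lem:lower_max}, and the same $2x+O(1)$ bound on all costs via $b_{3x-2}$ to show that freezing never triggers. The differences are cosmetic: the paper picks $x$ for an arbitrary $\epsilon$ via $3x-1<1/\epsilon\le 3x+2$ instead of fixing $\epsilon=\frac{1}{3x}$, and it checks that costs stay below $1/\epsilon$ rather than below the $3/\epsilon$ threshold.
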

\begin{proof}
    Choose $x$ to be the integer such that $3x-1<\frac{1}{\epsilon}\leq 3x+2$. The only $(1-\epsilon)$-approximation to a maximum matching on this graph can only be the singular maximum matching, as we have that $(1-\frac{1}{\epsilon})(3x-1) > (1 - \frac{1}{3x-1})(3x-1) = 3x-2$. By Theorem~\ref{lem:lower_max}, the \textsc{ALT}-algorithm requires $\Omega(x^2) = \Omega(\frac{1}{\epsilon^2})$ passes in order to return a maximum matching. 

    Additionally, since the vertex costs are always non-decreasing at each step of the algorithm, each vertex cost is bounded above by the cost of $b_{3x-2}$ at the end of the execution of the algorithm. This vertex is matched twice during the setup phase, and twice during each sweep. Therefore, the cost of each vertex at the end of the execution is at most $2x + 2$. For $x \geq 3$ this is at most $3x - 1 < \frac{1}{\epsilon}$. Hence the freezing mechanism is not triggered, and so the proof also applies when it is included.
\end{proof}

\section{Conclusion}\label{sec:conclusion}
In this paper, we advance the understanding of the \textsc{ALT}-algorithm through three contributions. First, we simplify the algorithm by showing that the freezing mechanism as used in its original version by Assadi et al. is unnecessary. Second, we provide an alternative analysis based on augmenting paths, which explains the algorithm's behaviour more naturally and deals with the removal of the freezing mechanism in a natural way. Third, we construct the first hard instance on which the algorithm indeed requires $\Omega(\frac{1}{\epsilon^2})$ iterations.

We conclude with two open questions. 

First, and most importantly, are there algorithms that solely rely on the computation of maximal matchings that requires fewer than $\Theta(\frac{1}{\epsilon^2})$ iterations/matching computations to find a $(1-\epsilon)$-approximation? 

Second, can we prove lower bounds for such algorithms? For example, can we show that $\Omega(\frac{1}{\epsilon})$ matching computations are necessary to obtain a $(1-\epsilon)$-approximation?

\bibliography{bibliography}
\end{document}